\newtheorem{theorem}{Theorem}
\newcounter{RomanNumber}
\begin{document}

\title{Speeding up the classical simulation of Gaussian boson sampling with limited connectivity}

\author{Tian-Yu Yang}
\affiliation{ State Key Laboratory of Low Dimensional Quantum Physics, Department of Physics, \\
	Tsinghua University, Beijing 100084, China}
\author{Xiang-Bin Wang}
\email[Corresponding aurthor ]{xbwang@mail.tsinghua.edu.cn}
\affiliation{ State Key Laboratory of Low Dimensional Quantum Physics, Department of Physics, \\ Tsinghua University, Beijing 100084, China}
\affiliation{ Jinan Institute of Quantum technology, SAICT, Jinan 250101, China}
\affiliation{ Shanghai Branch, CAS Center for Excellence and Synergetic Innovation Center in Quantum Information and Quantum Physics, University of Science and Technology of China, Shanghai 201315, China}
\affiliation{ Shenzhen Institute for Quantum Science and Engineering, and Physics Department, Southern University of Science and Technology, Shenzhen 518055, China}
\affiliation{ Frontier Science Center for Quantum Information, Beijing 100084, China}
\centerline{}
\begin{abstract}
Gaussian Boson sampling (GBS) plays a crucially important role in demonstrating quantum advantage. 
As a major imperfection, the limited connectivity of the linear optical network weakens the quantum advantage result in recent experiments. Here we present a faster classical algorithm to simulate the GBS process with limited connectivity.
In this work, we introduce an enhanced classical algorithm for simulating GBS processes with limited connectivity.
It computes the loop Hafnian of an $n \times n$ symmetric matrix with bandwidth $w$ in $O(nw2^w)$ time which is better than the previous fastest algorithm which runs in $O(nw^2 2^w)$ time. This classical algorithm is helpful on clarifying how limited connectivity affects the computational complexity of GBS and tightening the boundary of quantum advantage in the GBS problem.

\end{abstract}

\maketitle

\section{Introduction}
Gaussian Boson sampling (GBS) is a variant of Boson sampling (BS) that was originally proposed to demonstrate the quantum advantage~\cite{aaronson_computational_2011,hamilton_gaussian_2017,kruse_detailed_2019,deshpande_quantum_2022}. In recent years, great progress has been made in experiments on GBS~\cite{zhong_quantum_2020, arrazola_quantum_2021, zhong_phase-programmable_2021, madsen_quantum_2022, thekkadath_experimental_2022, sempere-llagostera_experimentally_2022, deng_solving_2023}. Both the total number of optical modes and detected photons in GBS experiments have surpassed several hundred~\cite{madsen_quantum_2022,zhong_phase-programmable_2021}. Moreover, it is experimentally verified that GBS devices can enhance the classical stochastic algorithms in searching some graph features~\cite{sempere-llagostera_experimentally_2022,deng_solving_2023}.

The central issue in GBS experiments is to verify the quantum advantage of the result.
\hl{The time cost of the best known classical algorithm for simulating an ideal GBS process grows exponentially with the system size~\cite{bjorklund_faster_2019}.} Therefore a quantum advantage result might be achieved when the system size is large enough~\cite{bulmer_boundary_2022,quesada_quadratic_2022,brod_photonic_2019}.

However, there are always imperfections in real quantum setups, and hence the time cost of corresponding classical simulation will be reduced. 
When the quantum imperfection is too large, the corresponding classical simulation methods can work efficiently~\cite{qi_regimes_2020,qi_efficient_2022,oh_classical_2022}. \hl{In this situation, a quantum advantage result won't exist even if the system size of a GBS experiments is very large.} Therefore, finding better methods to classically simulate the imperfect GBS process is rather useful in exploring the tight criteria for quantum advantage of a GBS experiment.


A major imperfection of recent GBS experiments is the shallow optical circuit~\cite{oh_classical_2022,qi_efficient_2022,zhong_quantum_2020,zhong_phase-programmable_2021}. A shallow optical circuit has limited connectivity and its transform matrix will deviate from the global Haar-random unitary~\cite{russell_direct_2017,oh_classical_2022}. 
In the original GBS protocol~\cite{hamilton_gaussian_2017,kruse_detailed_2019}, the unitary transform matrix $U$ of the passive linear optical network should be randomly chosen from Haar measure. However, due to the photon loss which increases exponentially with the depth of the optical network, the optical network might not be deep enough to meet the requirements of the full connectivity and the global Haar-random unitary~\cite{russell_direct_2017}. 
Classical algorithms can take advantage of the bandwidth structure and the deviation from the global Haar-random unitary to realize a speed-up in simulating the whole sampling process~\cite{qi_efficient_2022,oh_classical_2022}. 
Actually, with limited connectivity in the quantum device, the speed-up of corresponding classical  simulation can be exponentially~\cite{oh_classical_2022}. 

The most time-consuming part of simulating the GBS process with limited connectivity is to calculate the loop Hafnian of banded matrices.  A classical algorithm to calculate the loop Hafnian of a banded $n\times n$ symmetric matrix with bandwidth $w$ in time $O\left(nw 4^w\right)$ is given in Ref.~\cite{qi_efficient_2022}. Later, an algorithm that  takes time $O\left(nw^2 2^w\right)$ is given in Ref.~\cite{oh_classical_2022}. Here we present a classical algorithm to calculate the loop Hafnian of a banded $n\times n$ matrix with bandwidth $w$ in time $O\left(nw 2^w\right)$. We also show that this algorithm can be used to calculate the loop Hafnian of sparse matrices.

Our algorithm reduces the time needed for classically simulating the GBS process with limited connectivity. This is helpful in clarifying how limited connectivity affects the computational complexity of GBS and tightening the boundary of quantum advantage in the GBS problem.

This article is organized as follows. In Sec.~\ref{sec:2}, we give an overview of the background knowledge which will be used later. In Sec.~\ref{sec:3}, we give our improved classical algorithm for calculating the loop Hafnian of banded matrices.
Finally, we make a summary  in Sec.~\ref{sec:4}.

\section{Overview of Gaussian Boson sampling and limited connectivity}\label{sec:2}
\subsection{Gaussian Boson sampling protocol} \label{sec:2_1}
In the GBS protocol, $K$ single-mode squeezed states (SMSSs) are injected into an $M$-mode passive linear optical network and detected in each output mode by a photon number resolving detector. The detected photon number of each photon number resolving detector forms an output sample which can be denoted as $\bar{n}=n_1 n_2 ...n_M$. The schematic setup of Gaussian boson sampling is shown in Fig.~\ref{fig:1}

\begin{figure}
	\centering
	\includegraphics[width=0.8\linewidth]{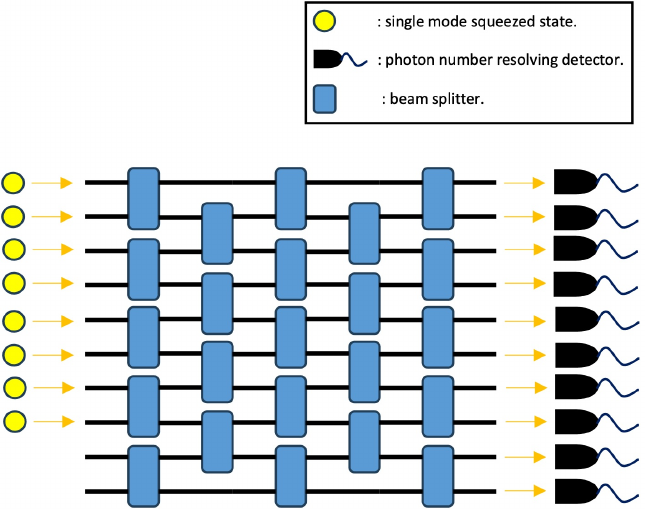}
	\caption{A schematic setup of Gaussian boson sampling. In this example, $K=8$ single-mode squeezed states are injected into a passive linear optical network with $M=10$ optical modes. Then photon number resolving detectors detect the photon number in each output mode. An output pattern $\bar{n}=n_1 n_2 ... n_M$ is generated according to the detected results.}
	\label{fig:1}
\end{figure}

Before detection, the output quantum state of the passive linear optical network is a Gaussian quantum state~\cite{wang_quantum_2007,serafini_quantum_2017,weedbrook_gaussian_2012,simon_quantum-noise_1994}.
A Gaussian state is fully determined by its covariance matrix and displacement vector. Denote operator vector $\hat{\xi} = (\hat{a}_1^\dagger,\dots,\hat{a}_M^\dagger,\hat{a}_1,\dots,\hat{a}_M)^{T}$, where $\hat{a}_i^\dagger$ and $\hat{a}_i$ are the creation and annihilation operators in the $i$th ($i\in\left\{1,2,\dots,M\right\}$) optical mode, respectively.
The covariance matrix $\sigma$ and displacement vector $\bar{r}$ of the Gaussian state $\hat{\rho}$ is defined as
\begin{equation}
	\begin{aligned}
		&\bar{r} = \operatorname{Tr}[\hat{\rho} \hat{\xi}],\\
		&\sigma=\frac{1}{2}\operatorname{Tr}\left[\hat{\rho}\left\{(\hat{\xi}-\bar{r}),(\hat{\xi}-\bar{r})^{\dagger}\right\}\right],\\
	\end{aligned}
\end{equation}
where
\begin{equation}
	\begin{aligned}
		\left\{(\hat{\xi}-\bar{r}),(\hat{\xi}-\bar{r})^{\dagger}\right\} =& (\hat{\xi}-\bar{r}) (\hat{\xi}-\bar{r})^{\dagger} \\
		&+ ((\hat{\xi}-\bar{r}) (\hat{\xi}-\bar{r})^{\dagger})^{\dagger}.
	\end{aligned}
\end{equation}
Notice that $(\hat{\xi}-\bar{r}) (\hat{\xi}-\bar{r})^{\dagger}$ is the outer product of column vector $(\hat{\xi}-\bar{r})$ and row vector $(\hat{\xi}-\bar{r})^{\dagger}$ and $(\hat{\xi}-\bar{r}) (\hat{\xi}-\bar{r})^{\dagger} \ne ((\hat{\xi}-\bar{r}) (\hat{\xi}-\bar{r})^{\dagger})^{\dagger}$ as the operators  in the vector $\hat{\xi}$ might not commute with each other. As an example, assume $M=1$ so that $\hat{\xi}=(\hat{a}_1^\dagger,\hat{a}_1)^T$, we have
\begin{equation}
	(\hat{\xi}-\bar{r})(\hat{\xi}-\bar{r})^{\dagger}=\left(\begin{array}{cc}
		\hat{a}_1^\dagger \hat{a}_1 & \hat{a}_1^\dagger \hat{a}_1^\dagger \\
		\hat{a}_1 \hat{a}_1 & \hat{a}_1 \hat{a}_1^\dagger
	\end{array}\right).
\end{equation}
But, 
\begin{equation}
	((\hat{\xi}-\bar{r})(\hat{\xi}-\bar{r})^{\dagger})^{\dagger}=\left(\begin{array}{cc}
		 \hat{a}_1 \hat{a}_1^\dagger & \hat{a}_1^\dagger \hat{a}_1^\dagger \\
		\hat{a}_1 \hat{a}_1 &  \hat{a}_1^\dagger \hat{a}_1
	\end{array}\right) \ne (\hat{\xi}-\bar{r})(\hat{\xi}-\bar{r})^{\dagger}.
\end{equation}

\hl{Usually, a matrix (say matrix $A$) is used  to calculate the output probability distribution of a GBS process~\cite{kruse_detailed_2019,hamilton_gaussian_2017,quesada_quadratic_2022}.}  Denote that $X_{2M} = \left(\begin{array}{cc}
		0 & I_M \\
		I_M & 0
	\end{array}\right)$, and $I_{2M}$ (or $I_M$) as identity matrix with rank $2M$ (or $M$). The matrix $A$ is fully determined by the output Gaussian sate as follows:
\begin{equation}
		A_{i,j}=\begin{cases}(\tilde{A})_{i, j} & \text { if } i \neq j \\ \tilde{y}_i & \text { if } i=j\end{cases},
\end{equation}
where
\begin{equation}
		\begin{aligned}
		&\sigma_Q=\sigma +I_{2 M} / 2, \\
		&\tilde{A}=X_{2 M}\left(I_{2 M}-\sigma_Q^{-1}\right), \\
		& \tilde{y}= X_{2 M}\sigma_Q^{-1} \bar{r}.
	\end{aligned}
\end{equation}

 The probability of generating an output sample $\bar{n}$ is
\begin{equation}
	\begin{aligned}\label{eq:2_2_1}
		&p(\bar{n})=\frac{\exp \left(-\frac{1}{2} \bar{r}^{\dagger} \sigma_Q^{-1} \bar{r} \right)}{\sqrt{\operatorname{det}(\sigma_Q)}} \frac{\operatorname{lhaf}\left(A_{\bar{n}}\right)}{n_{1} ! \cdots n_{M} !},\\
		\end{aligned}
\end{equation}
where $\operatorname{lhaf}(A)=\sum_{M \in \operatorname{SPM}(n)} \prod_{(i, j) \in M} A_{i, j}$ is the loop Hafnian function of matrix $A$, $\operatorname{SPM}$ is the single-pair matchings~\cite{bjorklund_faster_2019} which is the set of perfect matchings with loops. The matrix $A_{\bar{n}}$ is obtained from $A$ as follows $\bar{n}$~\cite{quesada_quadratic_2022,quesada_exact_2020}: for $\forall i = 1,\dots, M$, if $n_i = 0$, the rows and columns are deleted from matrix A; if $n_i \ne 0$, rows and columns $i$ and $i+M$ are repeated $n_i$ times.
%

\subsection{Continuous variables quantum systems}
If a Gaussian state is input into a linear optical network, the output quantum state is also a Gaussian state.
Denote the unitary operator corresponding to the passive linear optical network as $\hat{U}$. A property of the passive linear optical network is:
\begin{equation}
	\hat{U}^\dagger \hat{\xi} \hat{U} = T \hat{\xi},
\end{equation}
where
\begin{equation}
	T = \left(\begin{array}{cc}
		U & 0 \\
		0 & U^*
	\end{array}\right).
\end{equation} 
The output Gaussian state and the input Gaussian state is related by
\begin{equation}
	\begin{aligned}
		&\sigma =T \sigma_{\text{in}} T^{\dagger},\\
		&\bar{r} = T\bar{r}_{\text{in}},\\
		\label{eq:2_2_2}
	\end{aligned}
\end{equation}
where $\sigma_\text{in}$ and $\bar{r}_{\text{in}}$ are the covariance matrix and displacement vector of the input Gaussian state. 
The SMSS in input mode $i$ has squeezing strength $s_i$ and phase $\phi_i$. For simplicity, assume that $\phi_i = 0$ for $i = 1,\dots,M$. The covariance matrix of the input Gaussian state is
\begin{equation}
	\sigma_{in} = \frac{1}{2}\left(\begin{array}{cc}
		\bigoplus\limits_{i=1}^M \cosh 2s_i & \bigoplus\limits_{i=1}^M \sinh 2s_i\\
		\bigoplus\limits_{i=1}^M \sinh 2s_i & \bigoplus\limits_{i=1}^M \cosh 2s_i 
	\end{array} \right).
\label{eq:2_2_3}
\end{equation}
The matrix $\tilde{A}$ is thus $\tilde{A} = \tilde{B} \bigoplus \tilde{B^*}$, where 
\begin{equation}
	\tilde{B} = U \left(\bigoplus\limits_{i=1}^M \tanh s_i \right) U^T.
	\label{eq:2_2_4}
\end{equation}

If a part of the optical modes in the Gaussian state are measured with photon number resolving detectors, the remaining quantum state will be a non-Gaussian state~\cite{quesada_simulating_2019,su_conversion_2019}.
A Gaussian state can be represented in the following form:
\begin{equation}
	\begin{aligned}
		\hat{\rho} &= \frac{1}{\pi^{2M}} \int \mathrm{d}^2 \vec{\alpha} \int \mathrm{d}^2 \vec{\beta} |\vec{\beta}\rangle \langle\vec{\alpha}| 
		\langle \vec{\beta} \left| \hat{\rho} \right| \vec{\alpha} \rangle       \\
		& = \frac{\mathcal{P}_0}{\pi^{2M}} \int \mathrm{d}^2 \vec{\alpha} \int \mathrm{d}^2 \vec{\beta} |\vec{\beta}\rangle \langle\vec{\alpha}| 
		\mathrm{exp} (-\frac{|\tilde{\lambda}|^2}{2} + \frac{1}{2} \tilde{\lambda}^{\mathrm{T}} \tilde{A} \tilde{\lambda} +\tilde{\lambda}^{\mathrm{T}}\tilde{y}  ),       
	\end{aligned}
\end{equation}
where 
\begin{equation}
	\begin{aligned}
		\tilde{\lambda} &= \left(\beta_{1}^*,\ldots,\beta_{M}^*, \alpha_{1},\dots,\alpha_{M}\right)^{\mathrm{T}},\\
		\mathcal{P}_0  &=\frac{\exp \left(-\frac{1}{2} \bar{r}^{\dagger} \sigma_Q^{-1} \bar{r} \right)}{\sqrt{\operatorname{det}(\sigma_Q)}},\\
		\vec{\alpha} &= \left(\alpha_{1},\dots,\alpha_{M}\right)^{\mathrm{T}},\\
		\vec{\beta} &= \left(\beta_{1},\ldots,\beta_{M}\right)^{\mathrm{T}},\\
	\end{aligned}
\end{equation}
$\beta_{i}$ and $\alpha_i$ for $i=1,\dots,M$ are complex variables.

For our convenience, define the permutation matrix $P$, such that $\tilde{\gamma} = P \tilde{\lambda} = \left(\beta_{1}^*,\alpha_{1},\beta_{2}^*,\alpha_{2},\dots,\beta_{M}^*, \alpha_{M}\right)^{\mathrm{T}}$. Let $\tilde{R} = P^\mathrm{T}\tilde{A}P$, $\tilde{l} = P \tilde{y}$. We then have
\begin{equation}
		\hat{\rho} = \frac{\mathcal{P}_0}{\pi^{2M}} \int \mathrm{d}^2 \vec{\alpha} \int \mathrm{d}^2 \vec{\beta} |\vec{\beta}\rangle \langle\vec{\alpha}| 
		\mathrm{exp} (-\frac{|\tilde{\gamma}|^2}{2} + \frac{1}{2} \tilde{\gamma}^{\mathrm{T}} \tilde{R} \tilde{\gamma} +\tilde{\gamma}^{\mathrm{T}}\tilde{l}  ).  
		\label{eq:2}     
\end{equation}

Suppose the first $N$ modes of an $M$-mode Gaussian state are measured and a sample pattern $\bar{n}=n_1 n_2 \dots n_N$ is observed.
Denote that
\begin{equation}\label{eq:2_1_2}
	\begin{aligned}
		\tilde{\gamma}_h  &= (\tilde{\gamma}_{2N+1}^*,\tilde{\gamma}_{2N},\dots,\tilde{\gamma}_{2M})^{\mathrm{T}},\\
		\tilde{\gamma}_d &= (\tilde{\gamma}_1, \tilde{\gamma}_2, \dots,\tilde{\gamma}_{2N}) ^{\mathrm{T}}, \\
		\tilde{l}_h &= (\tilde{l}_{2N+1},\tilde{l}_{2N}\dots,\tilde{y}_{2M})^T,\\
		\tilde{l}_d &= (\tilde{l}_{1},\tilde{l}_{2}\dots,\tilde{l}_{2N})^T,\\
		\tilde{R}&=\left(\begin{array}{cc}
			\tilde{R}_{dd} & \tilde{R}_{dh}\\
			\tilde{R}_{hd} & \tilde{R}_{hh}
		\end{array}\right),
	\end{aligned}
\end{equation}
where $\tilde{R}_{dd}$ is a $2N\times 2N$ matrix corresponding to modes $1$ to $N$, $\tilde{R}_{hh}$ is a $2(M-N) \times 2(M-N)$ matrix corresponding to modes $N+1$ to $M$, $\tilde{R}_{dh}$ is a $2N \times 2(M-N)$ matrix represents the correlation between modes $1$ to $N$ and modes $N+1$ to $M$.
The remaining non-Gaussian state is~\cite{su_conversion_2019}
\begin{equation}
	\begin{aligned}
		\hat{\rho}^{(\bar{n})}=&\frac{\mathcal{P}_0}{\pi^{2 (M-N)}} \int d^2 \vec{\alpha}_{(N)}  \int d^2 \vec{\beta}_{(N)}  |\vec{\beta}_{(N)} \rangle \langle\vec{\alpha}_{(N)} |  F\left(\tilde{\gamma}_h\right),
	\end{aligned}
	\label{eq:2_1_1}
\end{equation}
where 
\begin{equation}
	\begin{aligned}
		d^2 \vec{\beta}_{(N)} &= d^2 \beta_{N+1} d^2 \beta_{N+2}\dots d^2\beta_{M},\\
		d^2 \vec{\alpha}_{(N)} &= d^2\alpha_{N+1} d^2 \alpha_{N+2}\dots d^2 \alpha_{M},
	\end{aligned}
\end{equation} 
$|\vec{\beta}_{(N)}\rangle = |\beta_{N+1},\beta_{N+2}\dots,\beta_{M}\rangle $ and $|\vec{\alpha}_{(N)}\rangle = |\alpha_{N+1},\alpha_{N+2}\dots,\alpha_{M}\rangle $ are coherent states, and 
\begin{equation}
	\begin{aligned}
		F\left(\tilde{\gamma}_h\right) 
		& =\left.\frac{\mathcal{P}_0}{\bar{n} !} \exp \left(L_2\right) \prod_{k=N+1}^M\left(\frac{\partial^2}{\partial \alpha_k \partial \beta_k^*}\right)^{n_k} \exp \left(L_3\right)\right|_{\tilde{\gamma}_d=0}, \\
		L_2 & =-\frac{1}{2}\left|\tilde{\gamma}_h\right|^2+\frac{1}{2} \tilde{\gamma}_h^{T} \tilde{R}_{h h} \tilde{\gamma}_h+\tilde{\gamma}_h^{T} \tilde{l}_h, \\
		L_3 &= -\left|\tilde{\gamma}_d\right|^2+\frac{1}{2} \tilde{\gamma}_d^{T} \tilde{R}_{d d} \tilde{\gamma}_d+\tilde{\gamma}_d^{T} \tilde{l}_d+\tilde{\gamma}_d^{T} \tilde{R}_{d h} \tilde{\gamma}_h .
	\end{aligned}
\end{equation}

If all optical modes of the Gaussian state are measured (N=M), then Eq.~(\ref{eq:2_1_1}) gives the probability of obtaining the output sample $\bar{n}$, which is equivalent to Eq.~(\ref{eq:2_2_1}).

\subsection{Limited connectivity} \label{sec:2_3}
In the Gaussian Boson Sampling protocol~\cite{hamilton_gaussian_2017,kruse_detailed_2019}, the transform matrix $U$ of the passive linear optical network should be randomly chosen from the Haar measure.
The circuit depth needed to realize an arbitrary unitary transform in a passive linear optical network is $O(M)$, where $M$ is the number of optical modes~\cite{russell_direct_2017,reck_experimental_1994,clements_optimal_2016,go_exploring_2023}. 
However, due to the photon loss, the circuit depth of the optical network might not be deep enough to meet the requirements of the full connectivity and the global Haar-random unitary~\cite{russell_direct_2017}. This is because photon loss rate $\epsilon$ will increase exponentially with the depth of the optical network, i.e., $\epsilon = \epsilon_0^D$, where $\epsilon_0$ is the photon loss of each layer of the optical network. If the photon loss rate is too high, the quantum advantage result of GBS experiments will be destroyed~\cite{qi_regimes_2020,martinez-cifuentes_classical_2023,oh_tensor_2023}.

\begin{figure}
	\centering
	\includegraphics[width=0.6\linewidth]{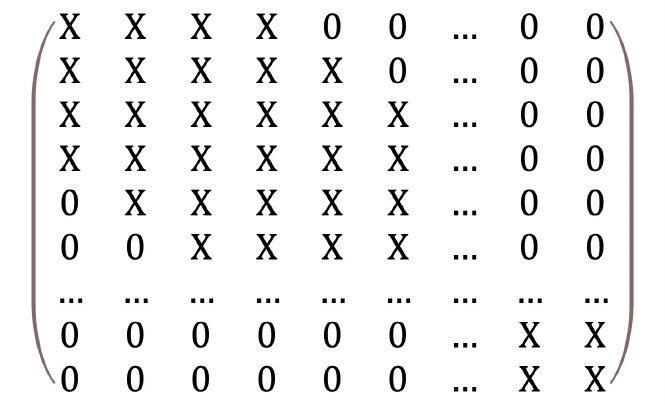}
	\caption{A symmetric matrix with a bandwidth structure. X represents an arbitrary non-zero entry in the matrix. The bandwidth in this example is $w = 3$.}
	\label{fig:2}
\end{figure}

The shallow circuit depth lead to a limited connected interferometer~\cite{russell_direct_2017,qi_efficient_2022,oh_classical_2022}. Assume that the beam splitters in the optical network is local, which means they can only connect \hl{the adjacent optical modes} as shown in Fig.~\ref{fig:1}.
If $D<M$, the transform matrix $U$ of the passive linear optical network will have a bandwidth structure~\cite{qi_efficient_2022}, i.e., 
\begin{equation}
	\left|U_{i,j}\right|=0 \quad \text{for} \ \left|i-j \right|>w_U,
\end{equation}
and
\begin{equation}
	\left|U_{i,j}\right|\ne 0 \quad \text{for} \ \left|i-j \right| \le w_U,
\end{equation} 
where $w_U= D$ is the bandwidth of the transform matrix $U$. An example of a matrix with bandwidth structure is shown in Fig.~\ref{fig:2}.
According to Eq.~(\ref{eq:2_2_2}) and~(\ref{eq:2_2_3}), the covariance matrix $\sigma$ of the output Gaussian state will have a bandwidth structure.
Thus, if beam splitters are local, the circuit depth $D$ must be no less than $M/2$ to reach the full connectivity.
Note that, according to Eq.~(\ref{eq:2_2_4}), matrix $\tilde{B}$ (and $A$ in Eq.~(\ref{eq:2_2_1})) has bandwidth
\begin{equation}
		w = \left\{ \begin{array}{c}
			2 w_U -1 \quad \text{for} \ w_U < M/2 \\
			M - 1 \quad \text{for} \ w_U \ge M/2
			\end{array}\right. .
\end{equation}

Recently, a scheme known as ``high-dimensional GBS" has been proposed~\cite{deshpande_quantum_2022}. This scheme suggests that by interfering non-adjacent optical modes, the connectivity can be improved while maintaining a relatively shallow circuit depth. 
However, due to the limited circuit depth, the transformation matrix in this scheme cannot represent an arbitrary unitary matrix. 
Consequently, the transformation matrix deviates from the global Haar-random unitary. 
To address this, the scheme introduces a local Haar-random unitary assumption, which says that the transformation matrix corresponding to each individual beam splitter is randomly selected from the Haar measure. 
Under this local Haar-random unitary assumption, Ref.~\cite{oh_classical_2022} demonstrates that when the circuit depth is too shallow, the high dimensional GBS process can be approximate by a limited connected GBS process with a small error.



As a result of the shallow circuit depth and the deviation from the Haar measure,  a speed-up can be realized in simulating the corresponding GBS process~\cite{qi_efficient_2022,oh_classical_2022}. 
The speed-up is attributed to the faster computation of the loop Hafnian for matrices with bandwidth as compared to the computation of general matrices.

\subsection{Classical simulation of GBS} \label{sec:2_4}
To date, the most efficient classical simulation method for simulating a general GBS process has been presented in Ref.~\cite{quesada_quadratic_2022}.
This classical algorithm, which samples from an $M$-modes Gaussian state $\hat{\rho}$, operates as follows.

\begin{enumerate}
	\item If $\hat{\rho}$ is a mixed state, it can be decomposed as a classical mixture of pure displaced Gaussian states. We randomly select a pure displaced Gaussian state based on this classical mixture. This can be done in polynomial time~\cite{serafini_quantum_2017,quesada_quadratic_2022}.  Its covariance matrix and displacement vector are denoted as $\sigma$ and $\bar{r}$, respectively. 
	\item If $\hat{\rho}$ is a pure state, denote its covariance matrix and displacement vector as $\sigma$ and $\bar{r}$, respectively.
\end{enumerate}
For $k=1,\dots,M$:
\begin{enumerate}
	\setcounter{enumi}{1}
	\item Drawn a sample $\bar{\alpha}^k = \left(a_{k+1},\dots,\alpha_{M}\right)$ from the probability distribution:
	\begin{equation}
		p(\bar{\alpha}) = \left\langle \bar{\alpha}^k \left|\hat{\rho}_{d}\right|  \bar{\alpha}^k  \right\rangle,
	\end{equation}
where $\hat{\rho}_{d} = \operatorname{Tr}_{1\mbox{-}k} \left[\hat{\rho}\right]$, and $\operatorname{Tr}_{1\mbox{-}k} \left[\cdot\right]$ is the partial trace from mode $1$ to mode $k$. This is equivalent to measure the modes $k+1$ to $M$ by heterodyne measurements.  
	\item Compute the conditional covariance matrix $\sigma_{1\mbox{-}k}^{(\bar{\alpha}^k)}$ and displacement $\bar{r}_{1\mbox{-}k}^{(\bar{\alpha}^k)}$ of the remaining Gaussian state. 
	If $k=M$, let $\sigma_{1\mbox{-}k}^{(\bar{\alpha}^k)} = \sigma$ and $\bar{r}_{1\mbox{-}k}^{(\bar{\alpha}^k)}=\bar{r}$.
	Notice that the conditional quantum state in modes $1,\dots,k$ is still Gaussian if modes $k+1,\dots,M$ is measured by heterodyne measurements~\cite{serafini_quantum_2017}.
	\item Given a cutoff $N_{\text{max}}$, use Eq.~(\ref{eq:2_2_1}) with $\sigma_{1\mbox{-}k}^{(\bar{\alpha}^k)}$ and  $\bar{r}_{1\mbox{-}k}^{(\bar{\alpha}^k)}$ to calculate $p\left(n_1,\dots,n_k\right)$ for $n_k = 0,1,\dots,N_{\text{max}}$. 
	\item Sample $n_k$ from:
	\begin{equation}
		p(n_k) = \frac{p\left(n_1,\dots,n_k\right)}{p\left(n_1,\dots,n_{k-1}\right)}.
		\label{eq:3}
	\end{equation}
\end{enumerate}

The most time-consuming part for the above classical simulation method is to calculate the probability $p\left(n_1,\dots,n_k\right)$ of an output sample pattern $n_1,\dots,n_M$. According to Eq.~(\ref{eq:2_2_1}), we have
\begin{equation}
	\begin{aligned}
		p\left(n_1,\dots,n_k\right) =& \frac{\exp \left(-\frac{1}{2} \bar{r}_{\mathcal{A}}^{(\mathcal{B})\dagger} (\sigma_Q)_{\mathcal{A}}^{(\mathcal{B})-1} \bar{r}_{\mathcal{A}}^{(\mathcal{B})} \right)}{\sqrt{\operatorname{det}\left((\sigma_Q)_{\mathcal{A}}^{(\mathcal{B})}\right) }} \\
		&\times \frac{\operatorname{lHaf}\left(\left(A_{\mathcal{A}}^{(\mathcal{B})}\right)_{\bar{n}}\right)}{n_{1} ! \cdots n_{M} !},\\
	\end{aligned}
\end{equation}
where the $\bar{r}_{\mathcal{A}}^{(\mathcal{B})}$, $(\sigma_Q)_{\mathcal{A}}^{(\mathcal{B})}$ and $A_{\mathcal{A}}^{(\mathcal{B})}$ are the corresponding conditional matrices for subsystem contains modes $1$ to $k$ (denoted as $\mathcal{A}$) when modes $k+1$ to $M$ are measured by heterodyne measurements with outcome denoted as $\mathcal{B}$. $A_{\mathcal{A}}^{(\mathcal{B})}$ can be computed by the following equation:
\begin{equation}
	A_{\mathcal{A}}^{(\mathcal{B})} = \begin{cases}(\tilde{A}_{\mathcal{A}}^{(\mathcal{B})})_{i, j} & \text { if } i \neq j \\ (\tilde{y}_{\mathcal{A}}^{(\mathcal{B})})_i & \text { if } i=j\end{cases}.
\end{equation}
Note that if the Gaussian state $\hat{\rho}$ is a pure state, the conditional Gaussian state is still a pure state. So, we have $\tilde{A}_{\mathcal{A}}^{(\mathcal{B})} = \tilde{B}_{\mathcal{A}} \bigoplus \tilde{B}_{\mathcal{A}}^*$. As pointed in Ref.~\cite{oh_classical_2022}, we have $\tilde{B}_{\mathcal{A}} = [U (\bigoplus\limits_{i} \tanh s_i ) U^T ]_{\mathcal{A}} $. This shows that $\tilde{B}_{\mathcal{A}}$ has the same bandwidth structure as $\tilde{B}$. A classical algorithm to calculate the loop Hafnian of a banded matrix is thus needed for the classical simulation of GBS with limited connectivity.

\section{simulate the sampling process with limited connectivity}\label{sec:3}

\subsection{Loop Hafnian algorithm for banded matrices}\label{sec:3_1}
An algorithm to calculate the loop Hafnian of an $n \times n$ symmetric matrix with bandwidth $w$ in time $O\left(nw 4^w\right)$ is given in Ref.~\cite{qi_efficient_2022}.
Then, a faster algorithm with time complexity $O\left(n w_t^2 2^{w_t}\right)$ to calculate the loop Hafnian of an $n\times n$ symmetric matrix is proposed in Ref.~\cite{oh_classical_2022}, where $w_t$ is the treewidth of the graph corresponding to the matrix~\cite{cifuentes_efficient_2016}. For a banded matrix, the smallest treewidth $w_t$ is equal to the bandwidth $w$, i.e., $w = w_t$. So the time complexity for this algorithm to calculate the loop Hafnian of an $n\times n $ symmetric matrix with bandwidth $w$ is $O\left(n w^2 2^w\right)$. 
Here we show that the loop Hafnian of an $n \times n$ symmetric matrix with bandwidth $w$ can be calculated in $O\left( n w 2^w \right)$. 

Our algorithm to calculate the loop Hafnian for banded matrices is outlined as follows.

\textbf{Algorithm.}
To calculate the loop Hafnian of an $n\times n$ symmetric matrix $B$ with bandwidth $w$:\\
\begin{enumerate}
	\item Let $C_{\emptyset}^0 = 1$.
\end{enumerate}
For $t=1,\dots,n$:
\begin{enumerate}
\setcounter{enumi}{1}
	\item Let $t_w = \min\left(t+w,n\right)$, and $P(\{t+1,\dots,t_w\})$ be the set of all subsets of $\left\{t+1,\dots,t_w\right\}$.
	\item For every $Z^t \in P(\{t+1,\dots,t_w\})$ satisfying $Z^t \ne \emptyset$ and $\left|Z^t\right| \le \min \left(t,w\right)$, let
	\begin{equation}
		\quad \quad C^t_{Z^t} = \sum_{x \in Z^t }  B_{t,x} C^{t-1}_{Z^t \backslash \{x\}} + C^{t-1}_{Z^t \cup \left\{t\right\}} + B_{t,t} C^{t-1}_{Z^t},\label{eq:3_1_3}
	\end{equation}
	and if $t_w \in Z^t$, then
	\begin{equation}
		C^t_{Z^t} =  B_{t,t_w} C^{t-1}_{Z^t \backslash \{t_w\}}.\label{eq:3_1_4}
	\end{equation}
	During the above iterations, if $C^{t-1}_{\{\dots\}}$  is not given in the previous steps, it is treated as 0. 
	\item Let 
	\begin{equation}
		C^t_\emptyset = B_{t,t}C^{t-1}_\emptyset + C^{t-1}_{\{t\}}. \label{eq:3_1_5}
	\end{equation}
\end{enumerate}
The loop Hafnian of matrix $B$ is obtained in the final step $t = N$ by
\begin{equation}
	\operatorname{lhaf}\left( B \right) = C^n_{\emptyset}.
\end{equation}
 The time complexity of this algorithm is $O(nw 2^{w})$ as shown in theorem~\ref{thm:1}.

\begin{theorem}\label{thm:1}
	Let $B$ be an $n\times n$ symmetric matrix with bandwidth $w$. Then its loop Hafnian can be calculated in $O(nw 2^{w})$.
\end{theorem}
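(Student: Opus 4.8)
The plan is to verify two things: that the recursion of Eqs.~\eqref{eq:3_1_3}--\eqref{eq:3_1_5} returns $\operatorname{lhaf}(B)$, and that it does so with $O(nw2^w)$ arithmetic operations. For correctness I would prove, by induction on $t\in\{0,1,\dots,n\}$, the invariant that for every enumerated window set $Z\subseteq\{t+1,\dots,t_w\}$ with $|Z|\le\min(t,w)$ the value $C^t_Z$ equals the sum, over all single-pair matchings $\mathcal{M}$ of the index set $\{1,\dots,t\}\cup Z$ subject to (a) every matched pair $(i,j)$ obeys $|i-j|\le w$ and (b) each element of $Z$ is matched to an element of $\{1,\dots,t\}$ (so no element of $Z$ is a loop or is matched inside $Z$), of the weight $\prod_{(i,j)\in\mathcal{M}}B_{i,j}$, a loop at $i$ being counted as the pair $(i,i)$. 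The base case $C^0_\emptyset=1$ is the empty matching; since $B_{i,j}=0$ whenever $|i-j|>w$, the out-of-band matchings contribute nothing, so the final value is $C^n_\emptyset=\sum_{\mathcal{M}\in\operatorname{SPM}(n)}\prod_{(i,j)\in\mathcal{M}}B_{i,j}=\operatorname{lhaf}(B)$.

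The inductive step is the core of the argument. Fixing $t\ge 1$ and an enumerated $Z^t$, I would classify every matching $\mathcal{M}$ of $\{1,\dots,t\}\cup Z^t$ by the role of the largest index $t$. If $t$ is a loop, deleting it gives a matching counted by $C^{t-1}_{Z^t}$, contributing $B_{t,t}C^{t-1}_{Z^t}$. If $t$ is matched to some $x>t$, the bandwidth bound forces $x\le t+w$, hence $x\in Z^t$; deleting the pair but keeping $x$ in the window gives a matching counted by $C^{t-1}_{Z^t\setminus\{x\}}$, summing to $\sum_{x\in Z^t}B_{t,x}C^{t-1}_{Z^t\setminus\{x\}}$. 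If $t$ is matched to some $j<t$, I would use the bookkeeping convention that the pair $\{j,t\}$ is created at step $j$, where the factor $B_{j,t}$ is charged as $t$ enters the window, and is only closed at step $t$; these matchings are exactly those counted by $C^{t-1}_{Z^t\cup\{t\}}$, contributing that term with factor $1$. The three cases are disjoint and exhaustive, so their sum is Eq.~\eqref{eq:3_1_3}, and the $Z^t=\emptyset$ instance is Eq.~\eqref{eq:3_1_5}. For the special case, I would note that when $t+w\le n$ and $t+w\in Z^t$ (so $t_w=t+w$), the unique index of $\{1,\dots,t\}$ within bandwidth of $t+w$ is $t$ itself, forcing $t$ to pair with $t+w$; this removes the loop term and, since the other terms refer to subsets lying outside the level-$(t-1)$ window and hence default to $0$, leaves the value $B_{t,t_w}C^{t-1}_{Z^t\setminus\{t_w\}}$ of Eq.~\eqref{eq:3_1_4}.

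For the running time I would count directly, using the standard convention that each arithmetic operation on matrix entries and each manipulation of a $w$-bit subset mask is $O(1)$, which is also the convention under which the earlier $O(nw4^w)$ and $O(nw^2 2^w)$ bounds are stated. The outer loop runs $n$ times. At step $t$ the enumerated sets $Z^t$ are subsets of the window $\{t+1,\dots,t_w\}$, of size $t_w-t\le w$, so there are at most $2^w$ of them; storing the level-$(t-1)$ values in an array indexed by the characteristic mask of that level's window, which differs from level $t$'s by a one-position shift, all the lookups in Eqs.~\eqref{eq:3_1_3}--\eqref{eq:3_1_5} cost $O(1)$. Evaluating Eq.~\eqref{eq:3_1_3} for one $Z^t$ costs $|Z^t|+2=O(w)$ operations, and Eqs.~\eqref{eq:3_1_4} and~\eqref{eq:3_1_5} cost $O(1)$; enumerating the $Z^t$ and filtering by $|Z^t|\le\min(t,w)$ adds $O(2^w)$ at step $t$. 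Hence each step costs $O(w2^w)$, the total is $O(nw2^w)$, and since only the tables for levels $t-1$ and $t$ and the $O(nw)$ nonzero entries of $B$ are kept, storage does not affect the bound.

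I expect the main obstacle to be the correctness induction rather than the count: one must check that the asymmetric treatment of a pair $\{j,t\}$, created at $\min(j,t)$ and closed at $\max(j,t)$, partitions $\operatorname{SPM}(n)$ with neither omission nor double counting; that this is compatible both with truncating the window at $t_w$ and with treating never-computed $C^{t-1}$ entries as $0$; and that the restriction $|Z^t|\le\min(t,w)$ is precisely the condition for $Z^t$ to be matchable into $\{1,\dots,t\}$, so that dropping larger $Z^t$ loses nothing. Once the combinatorial meaning of $C^t_{Z^t}$ is pinned down, the $O(nw2^w)$ bound is a routine state-count.
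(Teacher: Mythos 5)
Your proposal is correct, and its complexity half coincides with the paper's own proof of Theorem~\ref{thm:1}, which consists of exactly the count you give: at most $2^{w}$ window subsets per step, $O(w)$ arithmetic per coefficient via Eqs.~(\ref{eq:3_1_3})--(\ref{eq:3_1_5}), and $n$ steps, hence $O(nw2^{w})$. Where you genuinely diverge is in how correctness of the recursion is established: the paper does not prove $C^{n}_{\emptyset}=\operatorname{lhaf}(B)$ inside Theorem~\ref{thm:1} at all, but defers it to Sec.~\ref{sec:3_2} (and Appendix~\ref{ap:4}), where the $C^{t}_{Z}$ are identified as coefficients of the monomials $\prod_{x\in Z}\tilde{\gamma}_{x}$ that survive after sequentially applying the photon-number derivatives to the Gaussian generating function, the constant term at the final step being the loop Hafnian. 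You instead prove a purely combinatorial invariant --- $C^{t}_{Z}$ sums the in-band single-pair matchings of $\{1,\dots,t\}\cup Z$ in which every element of $Z$ is matched into $\{1,\dots,t\}$, each pair being charged at its smaller endpoint --- by induction on the role of the index $t$, and this three-way case split reproduces Eqs.~(\ref{eq:3_1_3}) and~(\ref{eq:3_1_5}) directly. The two routes buy different things: the paper's derivation ties the recursion to the sequential-measurement picture it also uses for sampling and covers the odd-size case via Appendix~\ref{ap:4}, while your matching invariant is elementary, independent of the Gaussian formalism, and makes the bookkeeping conventions transparent; in particular it shows that, under the default-to-zero rule, Eq.~(\ref{eq:3_1_3}) is correct in every case, whereas the shortcut Eq.~(\ref{eq:3_1_4}) is valid only when $t_{w}=t+w\le n$ --- if the window is truncated at $n$ (i.e., $n-w<t<n$ with $n\in Z^{t}$), the index $n$ may pair with vertices other than $t$, so Eq.~(\ref{eq:3_1_3}) must be used there. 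Your restriction of the special case to $t+w\le n$ handles this boundary situation more carefully than the paper's literal statement of the algorithm, and none of this affects the $O(nw2^{w})$ bound.
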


\begin{proof}
	As shown in our algorithm, the number of coefficients ($C^t_{Z_t}$, $C^t_\emptyset$ and $C^t_{\{t\}}$)  needed to be calculated for each $t \in \{1,\dots,n\}$  is at most $2^{w}$. 
	As shown in Eq.~(\ref{eq:3_1_3})~(\ref{eq:3_1_4})~(\ref{eq:3_1_5}), in each iteration, we need $O(w)$ steps to calculate each coefficient $C^{t}_{Z^t}$. So, for each $t$, the algorithm takes $O(w 2^{w})$ steps. The overall cost is thus $O\left(n w 2^{w} \right)$.

\end{proof}

An example of calculating a $4 \times 4$ matrix with bandwidth $w = 1$ using our algorithm can be found in Appendix~\ref{ap:1}.

Note that, our algorithm for calculating loop Hafnian function of matrices with bandwidth structure can be easily \hl{extended} to cases where the matrices is sparse (but not banded). A description of this can be found in Appendix~\ref{ap:2}

By combining our algorithm with the classical sampling techniques described in Ref.~\cite{quesada_quadratic_2022}, as summarized in Sec.~\ref{sec:2_4}, a limited connected GBS process with a bandwidth $w$ can be simulated in $O(M nw2^w)$ time, where $M$ represents the number of optical modes and $n$ denotes the maximum total photon number of the output samples. A proof of this statement is presented in Appendix~\ref{ap:3}.

\subsection{Validity of the algorithm }\label{sec:3_2}
\hl{By sequentially computing the states (usually non-Gaussian) of the remaining $M-k$ modes with the measurement outcome of modes $1$ to $k$ ($k=1,\dots,M$) in photon number basis, we can demonstrate the validity of the algorithm introduced in Sec.~\ref{sec:3_1}.
Recalling matrix $\tilde{R}$ defined in Eq.~(\ref{eq:2}), we denote the bandwidth of matrix $\tilde{R}$ by $w$.} According to the definition of $\tilde{R}$ and $\tilde{l}$, we know that
\begin{equation}
	\mathrm{lhaf}(R) = \mathrm{lhaf}(A),
\end{equation}
where
\begin{equation}
	R_{i,j} = \begin{cases}
		(\tilde{R})_{i, j} & \text{if } i \neq j \\
		\tilde{l}_i & \text{if } i = j
	\end{cases}.
\end{equation}
Assuming that the measurement results are $n_i = 1$ for every $i = 1, \dots, M$, we obtain
\begin{equation}
	p(11 \dots 1) = \mathcal{P}_0 \mathrm{lhaf}(R).
\end{equation}
\hl{Although we make the assumption that $R$ corresponds to a Gaussian state, the subsequent proof remains valid in the more general case as discussed in Appendix~\ref{ap:4}.}


 For convenience, define $\tilde{\gamma}_{d_i} = (\beta_i^*,\alpha_i)^\mathrm{T}$, $\tilde{\gamma}_{h_i} = (\beta_{i+1}^*,\alpha_{i+1},\dots,\beta_{M}^*,\alpha_{M})^\mathrm{T}$, $\tilde{R}_{dd}^i$ as a $2\times 2$ matrix corresponding to mode $i$, $\tilde{R}_{hh}^i$ as a $2(M-i) \times 2(M-i)$ matrix corresponding to modes $i+1$ to $M$, $\tilde{R}_{dh}^i$ as a $2 \times 2(M-i)$ matrix represents the correlation between mode $i$ and modes $i+1$ to $M$.

According to Eq.~(\ref{eq:2_1_1}), after measuring the mode $1$ in photon number basis, the remaining non-Gaussian quantum state is:
\begin{small}
\begin{equation}
	\begin{aligned}
			\hat{\rho}^{(n_1)} = & \frac{\mathcal{P}_0}{\pi^{2(M-1)}} \int \mathrm{d}^2 \vec{\alpha}_{(1)} \int \mathrm{d}^2 \vec{\beta}_{(1)}
			|\vec{\beta}_{(1)}\rangle \langle \vec{\alpha}_{(1)} |  \\
			&\times \left. \frac{\partial^2}{\partial \alpha_1 \partial \beta_1^*} \mathrm{exp} (\frac{1}{2} \tilde{\gamma}_{d_1}^\mathrm{T} \tilde{R}_{dd}^1 \tilde{\gamma}_{d_1} +\tilde{\gamma}_{d_1}^\mathrm{T} \tilde{R}_{dh}^1 \tilde{\gamma}_{h_1} + \tilde{\gamma}_{d_1}^\mathrm{T}\tilde{l}_{d_1})\right|_{\gamma_{d_1}=0}\\
			& \times \mathrm{exp} (-\frac{1}{2}|\tilde{\gamma}_{h_1}|^2 + \frac{1}{2} \tilde{\gamma}_{h_1}^\mathrm{T} \tilde{R}_{hh}^1 \tilde{\gamma}_{h_1} + \tilde{\gamma}_{h_1}^\mathrm{T}\tilde{l}_{h_1}).\\
	\end{aligned}
\end{equation}
\end{small}
We then have
\begin{small}
\begin{equation}
	\begin{aligned}
		&\left. \frac{\partial^2}{\partial \alpha_1 \partial \beta_1^*} \mathrm{exp} (\frac{1}{2} \tilde{\gamma}_{d_1}^\mathrm{T} \tilde{R}_{dd}^1 \tilde{\gamma}_{d_1} + \tilde{\gamma}_{d_1}^\mathrm{T} \tilde{R}_{dh}^1 \tilde{\gamma}_{h_1} + \tilde{\gamma}_{d_1}^\mathrm{T}\tilde{l}_{d_1})\right|_{\gamma_{d_1}=0}  \\
		=& (\tilde{R}_{dd}^1 )_{1,2} + (\tilde{l}_{d_1})_1(\tilde{l}_{d_1})_2 + \sum_{j,k=1}^{M} (\tilde{R}_{dh}^1)_{1,j} (\tilde{R}_{dh}^1)_{2,k} (\tilde{\gamma}_{h_1})_j
(\tilde{\gamma}_{h_1})_l \\
=& \tilde{R}_{12} + \tilde{l}_1 \tilde{l}_2 + + \sum_{j,k=1}^{M} \tilde{R}_{1,j+2} \tilde{R}_{2,k+2} \tilde{\gamma}_{j+2}  \tilde{\gamma}_{k+2}\\
		=& C^2_{\emptyset} + \sum_{Z^2} C^2_{Z^2}  \prod_{x \in Z^2} \tilde{\gamma}_x + \sum_{x \in \{3,\dots,M\} } D^2_x \tilde{\gamma}_x^2,
	\end{aligned}
\end{equation}
\end{small}
where $D^2_x$ is the coefficient for $\tilde{\gamma}_x^2$. 

Next, measuring mode $2$. The remaining non-Gaussian quantum state is:
\begin{small}
\begin{equation}
	\begin{aligned}
		\hat{\rho}^{(n_1n_2)} = & \frac{\mathcal{P}_0}{\pi^{2(M-2)}} \int \mathrm{d}^2 \vec{\alpha}_{(2)} \int \mathrm{d}^2 \vec{\beta}_{(2)}
		|\vec{\beta}_{(2)}\rangle \langle \vec{\alpha}_{(2)} |  \\
		& \times \mathrm{exp} (-\frac{1}{2}|\tilde{\gamma}_{h_2}|^2 + \frac{1}{2} \tilde{\gamma}_{h_2}^\mathrm{T} \tilde{R}_{hh}^2 \tilde{\gamma}_{h_2} + \tilde{\gamma}_{h_2}^\mathrm{T}\tilde{l}_{h_2})\\
		&\times  \frac{\partial^2}{\partial \alpha_2 \partial \beta_2^*} [\mathrm{exp} (\frac{1}{2} \tilde{\gamma}_{d_2}^\mathrm{T} \tilde{R}_{dd}^2 \tilde{\gamma}_{d_2} +\tilde{\gamma}_{d_2}^\mathrm{T} \tilde{R}_{dh}^2 \tilde{\gamma}_{h_2} + \tilde{\gamma}_{d_2}^\mathrm{T}\tilde{l}_{d_2})\\
		&\left. \times (C^2_{\emptyset} + \sum_{Z^2} C^2_{Z^2}  \prod_{x \in Z^2} \tilde{\gamma}_x + \sum_{x \in \{3,\dots,M\} } D^2_x \tilde{\gamma}_x^2) \left]\right.\right|_{\gamma_{d_2}=0}.\\
	\end{aligned}
\end{equation}
\end{small}
We have 
\begin{small}
\begin{equation}
	\begin{aligned}
		& \frac{\partial^2}{\partial \alpha_2 \partial \beta_2^*} [\mathrm{exp} (\frac{1}{2} \tilde{\gamma}_{d_2}^\mathrm{T} \tilde{R}_{hh}^2 \tilde{\gamma}_{d_2} +\tilde{\gamma}_{d_2}^\mathrm{T} \tilde{R}_{dh}^2 \tilde{\gamma}_{h_2} + \tilde{\gamma}_{d_2}^\mathrm{T}\tilde{l}_{d_2})\\
		&\left. \times (C^2_{\emptyset} + \sum_{Z^2} C^2_{Z^2}  \prod_{x \in Z^2} \tilde{\gamma}_x + \sum_{x \in \{3,\dots,M\} } D^2_x \tilde{\gamma}_x^2) \left]\right.\right|_{\gamma_{d_2}=0}\\
		=& C^4_{\emptyset}  +\sum_{Z^4} C^4_{Z^4} \prod_{x \in Z^4} \tilde{\gamma}_x +\sum_{x_1,x_2,x_3 \in \{5,\dots,M\}} D^4_{x_1,x_2,x_3}  \tilde{\gamma}_{x_1}^2 \tilde{\gamma}_{x_2} \tilde{\gamma}_{x_3}	,\\
	\end{aligned}
\end{equation}
\end{small}
where $D^4_{x_1,x_2,x_3} $ is the coefficient for $\tilde{\gamma}_{x_1}^2 \tilde{\gamma}_{x_2} \tilde{\gamma}_{x_3}$.

Repeating this procedure to measure mode $1$ to mode $M$, we eventually find that
\begin{equation}
	\hat{\rho}^{(\bar{n} = 11\dots1)} = \mathcal{P}_0 C^{2M}_\emptyset = p(\bar{n} = 11\dots1) = \mathcal{P}_0 \mathrm{lhaf}(R) .
\end{equation}
Thus we prove
\begin{equation}
	C^{2M}_\emptyset = \mathrm{lhaf}(R).
\end{equation}
This demonstrates the validity of the algorithm given in Sec.~\ref{sec:3_1}.

\hl{Intuitively, as shown in the previous derivation, after measuring the first $t$ modes of the output Gaussian state, the state of the remaining modes is:}
\begin{equation}
	\begin{aligned}
		\hat{\rho}^{(n_1\dots n_t)} = & \frac{\mathcal{P}_0}{\pi^{2(M-t)}} \int \mathrm{d}^2 \vec{\alpha}_{(t)} \int \mathrm{d}^2 \vec{\beta}_{(t)}
		|\vec{\beta}_{(t)}\rangle \langle \vec{\alpha}_{(t)} |  \\
		& \times \mathrm{exp} \left(-\frac{1}{2}|\tilde{\gamma}_{h_t}|^2 + \frac{1}{2} \tilde{\gamma}_{h_t}^\mathrm{T} \tilde{R}_{hh}^t \tilde{\gamma}_{h_t} + \tilde{\gamma}_{h_t}^\mathrm{T}\tilde{l}_{h_t}\right)  \\
		& \times \mathrm{Poly}^t \left(R,\tilde{\gamma}_{h_t} \right)
		,\\
	\end{aligned}
\end{equation}
\hl{where $\mathrm{Poly}^t \left(R,\tilde{\gamma}_{h_t} \right)$ represents the sum of polynomial terms formed by complex variables in $\tilde{\gamma}_{h_t}$ and}
	\begin{equation}
		\label{eq:41}
		\begin{aligned}
			\mathrm{Poly}^t (R,\tilde{\gamma}_{h_t}) =& C^{2t}_\emptyset + \sum_{Z^{2t}} C^{2t}_{Z^{2t}} \prod_{x \in Z^{2t}} \tilde{\gamma}_x \\
			&+\sum_{x_1,\dots,x_{2t-1} \in \{2t+1,\dots,M\}} D^{2t}_{x_1,\dots,x_{2t-1}} \\
			& \times  \tilde{\gamma}_{x_1}^2 \tilde{\gamma}_{x_2} \dots \tilde{\gamma}_{x_{2t-1}}.
		\end{aligned}
	\end{equation}
 \hl{When all modes of the output state are measured, we have $t=M$ for Eq.~(\ref{eq:41}), and hence $\mathrm{Poly}^{M} (R,\tilde{\gamma}_{h_M})$ will not contain any complex variables, ensuring that the constant term $C^{2M}_\emptyset$ equals to $\mathrm{lhaf}(R)$. Terms $\{\prod_{x \in Z^{2t}} \tilde{\gamma}_x\}$ of different $Z^{2t}$ influence the constant term $C^{2M}_\emptyset$ in the subsequent computing.  High-order terms containing $\gamma_x^k$ with $k\ge 2$ do not affect the final result. Consequently, the iteration of $C^t_{Z^t}$ appearing in step 3 of our algorithm is required from $t=1$ to $t=2M$ to eventually determine the value of $\mathrm{lhaf}(R) = C^{2M}_\emptyset$.}

\section{Summary}\label{sec:4}
We present an algorithm to calculate the loop Hafnian of banded matrices. This algorithm runs in $O\left(nw2^w\right)$ for $n\times n$ symmetric matrices with bandwidth $w$. Our result is better than the prior art result of  $O\left(nw^2 2^w\right)$~\cite{oh_classical_2022}. Our classical algorithm is helpful on clarifying how limited connectivity reduces the computational resources required for classically simulating GBS processes and tightening the boundary of quantum advantage in GBS problem.



\appendix

\section{Example}\label{ap:1}
Here we give an example of computing the loop Hafnian of a $4 \times 4$ symmetric matrix with bandwidth $w = 1$ by the algorithm given in Sec.~\ref{sec:3_1}.
We consider the following case:
\begin{equation}
	\begin{aligned}
	\operatorname{lhaf}\left(A\right) =& \operatorname{lhaf}\left(\left[\begin{array}{cccc}
		A_{1,1} & A_{1,2} & 0 & 0\\
		A_{2,1} & A_{2,2} & A_{2,3} & 0\\
		0 & A_{3,2} & A_{3,3} & A_{3,4}\\
		0 & 0 & A_{4,3} & A_{4,4}\\
	\end{array}\right]\right) \\
	=& A_{1,1} A_{2,2} A_{3,3} A_{4,4} + A_{1,1} A_{2,2} A_{3,4} \\
	&+ A_{1,1} A_{2,3} A_{3,4} + A_{1,2} A_{3,3} A_{4,4} \\
	&+ A_{1,2}A_{3,4}.
	\end{aligned}
\end{equation}

The algorithm given in Sec.~\ref{sec:3_1} works as follows: 

Step 1.

We have $C^{0}_{\emptyset}=1$ .

Step 2.

According to Eq.~(\ref{eq:3_1_3})~(\ref{eq:3_1_4})~(\ref{eq:3_1_5}), we have $C^{1}_{\left\{2\right\}}=A_{1,2}$ and $C^{1}_{\emptyset}=A_{1,1}$.

Step 3.

According to Eq.~(\ref{eq:3_1_3})~(\ref{eq:3_1_4})~(\ref{eq:3_1_5}), we have $C^{2}_{\left\{3\right\}}=A_{2,3} C^{1}_{\emptyset} =A_{1,1} A_{2,3}  $ and $C^{2}_{\emptyset}=A_{2,2} C^{1}_{\emptyset} + C^{1}_{\left\{2\right\}} =A_{1,1} A_{2,2} + A_{1,2} $.

Step 3.

According to Eq.~(\ref{eq:3_1_3})~(\ref{eq:3_1_4})~(\ref{eq:3_1_5}), we have $C^{3}_{\left\{4\right\}}=A_{3,4} C^{2}_{\emptyset} =A_{1,1} A_{2,2} A_{3,4} + A_{1,2}A_{3,4}  $ and $C^{3}_{\emptyset}=A_{3,3} C^{2}_{\emptyset} + C^{2}_{\left\{3\right\}} =A_{1,1} A_{2,2} A_{3,3} + A_{1,2} A_{3,3} + A_{1,1} A_{2,3}$.

Step 4.

According to Eq.~(\ref{eq:3_1_5}), we have $C^{4}_{\emptyset}=A_{4,4} C^{3}_{\emptyset} + C^{3}_{\left\{4\right\}}=A_{1,1} A_{2,2} A_{3,3} A_{4,4} + A_{1,2} A_{3,3} A_{4,4} + A_{1,1} A_{2,3} A_{4,4} + A_{1,1} A_{2,2} A_{3,4} + A_{1,2}A_{3,4}$.

We can see that $C^{4}_{\emptyset} = \operatorname{lhaf} \left(A\right)$.

\section{Loop Hafnian of sparse matrices}\label{ap:2}
It is easy to find that, with slight modifications, the algorithm given in Sec.~\ref{sec:3_1} can be used to compute the loop Hafnian of symmetric sparse matrices.
Denote the largest number of zero-valued entries in each rows of a sparse matrix $B$ as $w$. 
The modified algorithm is outlined as follows.\\
\textbf{Algorithm.}
To calculate the loop Hafnian of an $n\times n$  symmetric sparse matrix $B$ with at most $w$ non-zero entries in each rows:\\
\begin{enumerate}
	\item Let $C_{\emptyset}^0 = 1$.
\end{enumerate}
For $t=1,\dots,n$:
\begin{enumerate}
	\setcounter{enumi}{1}
	\item Let $w_t$ be the number of non-zero entries in row $t$. Let $\{t_1,\dots,t_{w_t}\}$ be the columns that $B_{t,t_i} \ne 0$ for $i = 1,\dots,w_t$, and $P(\{t_1,\dots,t_{w_t}\})$ be the set of all subsets of $\left\{t_1,\dots,t_{w_t}\right\}$.
	\item For every $Z^t \in P(\{t_1,\dots,t_{w_t}\})$ satisfying $Z^t \ne \emptyset$ and $\left|Z^t\right| \le \min \left(t,w_t\right)$, if $t_{w_t} \in Z^t$, then
	\begin{equation}
		\quad \quad C^t_{Z^t} = \sum_{x \in Z^t }  B_{t,x} C^{t-1}_{Z^t \backslash \{x\}} + C^{t-1}_{Z^t \cup \left\{t\right\}} + B_{t,t} C^{t-1}_{Z^t},
	\end{equation}
	and if $t_{w_t} \in Z^t$, then
	\begin{equation}
		C^t_{Z^t} =  B_{t,t_{w_t}} C^{t-1}_{Z^t \backslash \{t_{w_t}\}}.
	\end{equation}
	During the above iterations, if $C^{t-1}_{\{\dots\}}$  is not given in the previous steps, it is treated as 0.  
	\item Let 
	\begin{equation}
		C^t_\emptyset = B_{t,t}C^{t-1}_\emptyset + C^{t-1}_{\{t\}}.
	\end{equation}
\end{enumerate}
The loop Hafnian of matrix $B$ is obtained in the final step $t = N$ by
\begin{equation}
	\operatorname{lhaf}\left( B \right) = C^n_{\emptyset}.
\end{equation}
The time cost for this algorithm is still $O\left( n w 2^w \right)$.

\section{The time complexity for simulation a GBS process with limited connectivity}\label{ap:3}
The classical simulation process is similar to that in Ref.~\cite{quesada_quadratic_2022}, as summarized in Sec.~\ref{sec:2_4}.
In this process, we sequentially sample $n_k$ for $k=1,\dots,M$ according to Eq.~(\ref{eq:3}). 
As we shown in Sec.~\ref{sec:3_2}, the computation of Eq.~(\ref{eq:3}) takes at most $O\left( n w 2^w \right)$ steps, assuming that $B$ has a bandwidth $w$, where $n = \sum_{i=1}^{M} n_i$. 
This is scaled up by at most the total number of modes $M$. Thus the time complexity for simulating such a GBS process is $O\left( M n w 2^w \right)$.

\section{Validity of the loop Hafnian algorithm for arbitrary matrix $R$}\label{ap:4}
For an arbitrary even matrix $R$, as shown in Ref.~\cite{hamilton_gaussian_2017,kruse_detailed_2019}, we have
\begin{equation}
	\left.\prod_{i=1}^M\left(\frac{\partial^2}{\partial \alpha_i \partial \beta_i^*}\right) \exp \left[\frac{1}{2} \tilde{\gamma}^\mathrm{T} \tilde{R} \tilde{\gamma} + \tilde{\gamma}^\mathrm{T} \tilde{l} \right]\right|_{\tilde{\gamma}=0} = \mathrm{lhaf}(R).
\end{equation}
If we calculate the partial derivative for $i=1,\dots,t$, we have
\begin{equation}
	\begin{aligned}
			\mathrm{lhaf}(R) = &\prod_{i=t+1}^M\left(\frac{\partial^2}{\partial \alpha_i \partial \beta_i^*}\right)  \mathrm{Poly}^t (R,\tilde{\gamma}_{h_t}) \\
			&\times  \left.\mathrm{\exp} \left[\frac{1}{2} \tilde{\gamma}_{h_t}^\mathrm{T} \tilde{R}^t_{hh} \tilde{\gamma}_{h_t} 
			+\tilde{\gamma}_{h_t}^\mathrm{T} \tilde{l}_{h_t} \right]\right|_{\tilde{\gamma}_{h_t}=0}.\\
				\label{eq:ap_4_2}
	\end{aligned}
\end{equation}

For an arbitrary odd matrix $R$ with rank $2M+1$, we have
\begin{equation}
	\left.\prod_{i=1}^M\left(\frac{\partial^2}{\partial \alpha_i \partial \beta_i^*}\right) \frac{\partial}{\partial \beta_{M+1}^*} \exp \left[\frac{1}{2} \tilde{x}^\mathrm{T} \tilde{R} \tilde{x} + \tilde{x}^\mathrm{T} \tilde{z} \right]\right|_{\tilde{\gamma}=0} = \mathrm{lhaf}(R),
\end{equation}
where $\tilde{x} = (\tilde{\gamma}^{\mathrm{T}},\beta_{M+1}^*)^{\mathrm{T}}$, $\tilde{z} = (\tilde{l}^{\mathrm{T}},\tilde{l}_{2M+1})^{\mathrm{T}}$.
If we calculate the partial derivative for $i=1,\dots,t$, we have
\begin{equation}
	\begin{aligned}
		\mathrm{lhaf}(R) = &\prod_{i=t+1}^M\left(\frac{\partial^2}{\partial \alpha_i \partial \beta_i^*}\right) \frac{\partial}{\partial \beta_{M+1}^*}  \mathrm{Poly}^t (R,\tilde{\gamma}_{h_t}) \\
		&\times  \left.\mathrm{\exp} \left[\frac{1}{2} \tilde{x}_{h_t}^\mathrm{T} \tilde{R}^t_{hh} \tilde{x}_{h_t} 
		+\tilde{x}_{h_t}^\mathrm{T} \tilde{z}_{h_t} \right]\right|_{\tilde{x}_{h_t}=0},\\
		\label{eq:ap_4_4}
	\end{aligned}
\end{equation}
where $\tilde{x}_{h_t} = (\beta_{t+1}^*,\alpha_{t+1},\dots,\beta_{M+1}^*)^{\mathrm{T}}$ and $\tilde{z}_{h_t} = (\tilde{l}_{2t+1},\tilde{l}_{2t+2},\dots,\tilde{l}_{2M+1})^{\mathrm{T}}$.

As shown in Eq.~(\ref{eq:ap_4_2}) and~(\ref{eq:ap_4_4}), the analysis in Sec.~\ref{sec:3_2} is valid for any symmetric matrix $R$.

\begin{acknowledgements} 
	 We acknowledge the financial support in part by National Natural Science Foundation of China grant No.11974204 and No.12174215.
\end{acknowledgements}

\bibliographystyle{apsrev4-2}
\bibliography{ref}
\end{document}